\newtheorem{theorem}{Theorem}
\newtheorem{lemma}{Lemma}
\newtheorem{definition}{Definition}
\newtheorem{proof}{Proof}
\newtheorem{remark}{Remark}
\begin{document}

%\runninghead{X.~Ma: Control and Estimation over Cognitive Radio Channels with Distributed and Dynamic Spectral Activity}

%\categorytitle{Demonstration}

\title{Estimation and Control over Cognitive Radio Channels with Distributed and Dynamic Spectral Activity}

\author{Xiao Ma, Seddik M. Djouadi, Husheng Li, and Teja Kuruganti, 
\thanks{X. Ma is with Western Digital, Irvine, CA, 92612 USA. Email: \{xma8623@gmail.com\}. S.M. Djouadi and H. Li are with the Department of Electrical Engineering and Computer Science, University of Tennessee, Knoxville, TN 37996 USA. E-mail: \{mdjouadi@utk.edu\}, and \{husheng@eecs.utk.edu\}. T. Kuruganti is with Oak Ridge National Lab, Oak Ridge, TN, 37831 USA. e-mail: \{kurugantipv@ornl.gov\}}}

%\received{Feb. 8th, 2014.}

%\acks{This class file was developed by Sunrise Setting Ltd,
%Torquay, Devon, UK. Website:
%\href{http://www.sunrise-setting.co.uk}{\texttt{www.sunrise-setting.co.uk}}}

\maketitle

\begin{abstract}
Since its first inception by Joseph Mitola III in 1998 cognitive radio (CR) systems have seen an explosion of papers in the communication community. However, the interaction of CR and control has remained vastly unexplored. In fact, when combined with control theory CR may pave the way for new and exciting control and communication applications. In this paper, the control and estimation problem via the well known two switch model which represents a CR link is considered. In particular, The optimal linear estimator subject to a CR link between the sensor and the estimator is derived. Furthermore, it is shown that in the Linear Quadratic Gaussian (LQG) Control law for a closed-loop system over double CR links is not linear in the state estimate. Consequently, the separation principle is shown to be violated. Several conditions of stochastic stability are also discussed. Illustrative numerical examples are provided to show the effectiveness of the results.
\end{abstract}

%\keywords{Cognitive radio; Estimation; Control; Two switch model}
{\bf Keywords}: Cognitive radio; Estimation; Control; Two switch model.

%\maketitle

%\footnotetext[2]{Please ensure that you use the most up to date
%class file,
%available from the ASJC Home Page at\\
%\href{http://www3.interscience.wiley.com/journal/117933310/home}{\mysmall\texttt{www3.interscience.wiley.com/journal/117933310/home}}}

\section{Introduction}
\label{sec:introduction}
 Rapid advances in communication and networking extends the areas of traditional engineering sciences. These wireless techniques are widespread to ease applications in different applications. However, the wide use of various technologies, such as radio, satellite, and phone service, also increase the need of bandwidth used for transmission. Most of the current bandwidth spectrum has been licensed to different users to ensure the coexistence of diverse wireless users \cite{IEEECM:Srinivasa}. Thus an important question: \emph{How can communication bandwidth be saved without affecting the performance significantly}?

The Federal Communications Commission's (FCC) frequency allocation chart \cite{FCC:NTIA} shows that although the majority of frequency bandwidth has been assigned to different users, large portions of it are not frequently used \cite{FCC:FCC}. To increase spectrum use, \textbf{cognitive radio architecture} \cite{KTH:Mitola} \cite{CN:Akyildiz}\cite{TVT:Yang} is proposed to sense available spectrum, search for unutilized spectrum, and communicate over the latter with minimal disturbance to primary users (with license). Each secondary user (without license) is able to sense the licensed spectrum band and detect unused spectrum holes. If a frequency channel is not being used by primary users, secondary users can access it for communication. Due to the sparse activities of primary users, cognitive radio can provide a large amount of spectrum for communications. Thus cognitive radio answers the question above as bandwidth is exploited efficiently resulting in money savings for transmission.

An interesting application of CR is in control engineering, for example in the smart grid \cite{IEEECST:xi}, where power systems rely on control algorithms for power regulation and management. Besides, applications of state estimation over CR are presented in \cite{TWE:xiao}. However, there are new issues that need to be addressed. For instance, when a user wishes to do remote control without having any authorized bandwidth or without enough funds to purchase it, CR can be employed to help the user reach his target. However, CR suffers from interruptions from primary users since secondary users must leave licensed channels when the former emerge. Hence, the cognitive radio-based communication link may not be reliable causing significant impact on control and state estimation, since sensor observations may not reach the controller in a timely fashion.

Modern control theory has been increasingly concerned with networks, communication channels, and remote control technology. Much research has been performed in the area of control and estimation over communication links under packet losses \cite{IEEETIT:Hadidi}$\sim$\cite{NOLC:Imer}. In this paper, we study the estimation and control problems through CR links with distributed and dynamic spectral activity using the important two-switch model proposed in \cite{IEEECM:Srinivasa}. This model includes two Bernoulli random variables that depend on each other, and represents primary users (PUs) interruption to secondary users' (SUs) transmitter and receiver, respectively. Compared with the existing work on estimation over lossy networks, our paper studies a mathematical model which has two lossy indicators ($s_t$ and $s_r$), which is different from all existing works that only consider one lossy indicator. The two lossy indicators are different with each other, e.g., $s_t$ is unknown to the receiver while $s_r$ is known. Moreover, $s_t$ also depends on $s_r$. Thus, from the mathematical point of view, this two-switch model is more general and includes the single lossy indicator case (e.g., \cite{IEEETAC:Sinopoli}, \cite{IEEETIT:Nahi}, \cite{IEEE:Schenato}, \cite{SCL:Gupta}, and \cite{Auto:Huang}, etc.). The algorithm design would take the information of both $s_r$ and $\mathbb{P}(s_t)$ since $s_t$ is unknown to the receiver.

The contributions of this paper are summarized as follows:
\begin{itemize}
\item The optimal linear estimator over a single CR link between the sensor and the estimator is derived;
\item Estimation and control of closed-loop system over CR links are addressed. The controller is shown to be a nonlinear function of the state estimate and depends on the error covariance.
\item It is shown that the principle of separation does not hold.
\item A linear feedback controller is employed in the closed-loop system and several stochastic stability conditions are derived.
\end{itemize}
A short preliminary version of some of the results appeared in \cite{ACC:xiao}.
\\
The paper is organized as follows: in section \ref{sec:twoSwitch}, the two switch model for a CR system is discussed. In Section \ref{sec:LOECR}, the linear optimal estimator for the system is derived. In Section \ref{sec:ctrE}, estimation and control of the closed-loop system over CR links is addressed. The conclusion and further works are discussed in section \ref{sec:con}.

\section{The Two-Switch Model} \label{sec:twoSwitch}

In this section, we introduce the two switch model used to model CR systems throughout this paper.

The general idea of CR system can be interpreted by Fig. \ref{fig:bandwidth}. Assume there are \emph{\textbf{N}} independent licensed channels that can be sensed named as $f_1, \; f_2,\; \cdots,\;f_N$; each channel is divided into parts by vertical lines and each part represents that channel in one time slot; the hatched slot represents that the channel is utilized by PUs and the SUs can not use it while the blank square means that it is free to be used by other users \cite{ACC:xiao}.

\begin{figure}
 \begin{center}
  % Requires \usepackage{graphicx}
  \includegraphics[width=8cm]{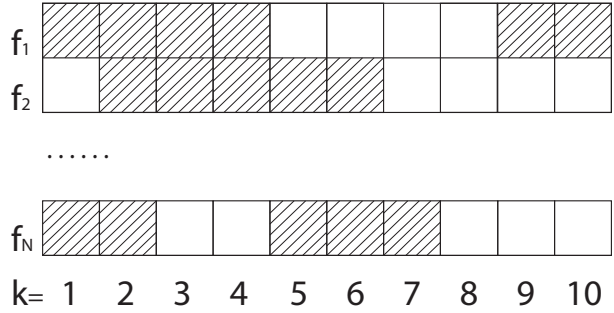}\\
   \end{center}
  \caption{Channels in CR system.}\label{fig:bandwidth}
\end{figure}

In CR systems, PUs represent the spectrum as they pay or as they are assigned to it. SUs take advantage of inactivity periods of PUs to transmit information through the available channels. SUs have to avoid transmitting to minimize interference with PUs. The two switch model considered in this paper is well known and was proposed in the communication community \cite{IEEECM:Srinivasa}.

First, consider the CR link shown in Fig. \ref{fig:bandwidth}. We assume one secondary transmitter (ST) and one secondary receiver (SR) in the presence of several PUs, for e.g., 3 PUs A, B, and C (for convenience of illustration). Two circles represent the sensing areas where the ST and SR can detect the activities of PUs. In Fig. \ref{fig:twoswitch}, for example, the ST can only sense whether A or B are active, and then reports that spectrum as available for transmission when both A and B are inactive. Similarly, the SR does the same for B and C. As a result, ST and SR may detect unused spectrum at different times.

%For the distributed property of this CR model, we can see that the higher the separation between the ST and SR, the smaller the overlap in their respective sensing regions, the more distributed the spectral environment becomes, and consequently, the higher the uncertainties at the transmitter and receiver \cite{IEEECM:Srinivasa}.

%Due to the independence of each channel, we can consider there is only one channel in CR system and design the estimator and controller over it. %The multi-channels case is easily extended from single channel case and is discussed in the last of this chapter.

The conceptual model in Fig. \ref{fig:twoswitch} produces the two-switch mathematical model shown in Fig. \ref{fig:twoMath}, where $s_t$ and $s_r$ denote the sensing variables of ST and SR. Let $s_t=0$ if ST senses active PUs and $s_t=1$ if no active PUs. $s_r=0$ if SR senses active PUs and $s_r=1$ if no active PUs. We also assume that PUs are independent with each other \cite{IEEECM:Srinivasa}. The circles in Fig. \ref{fig:twoswitch} represent the sensing regions.

\begin{figure}
 \begin{center}
  % Requires \usepackage{graphicx}
  \includegraphics[width=8cm]{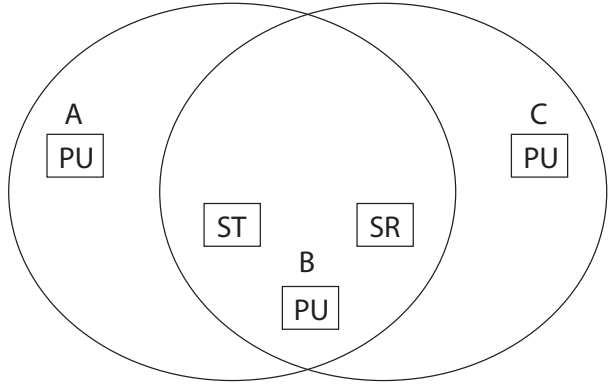}\\
   \end{center}
  \caption{Conceptual model of a cognitive radio system with ST and SR \cite{IEEECM:Srinivasa}.}\label{fig:twoswitch}
\end{figure}
\begin{figure}
 \begin{center}
  % Requires \usepackage{graphicx}
  \includegraphics[width=8cm]{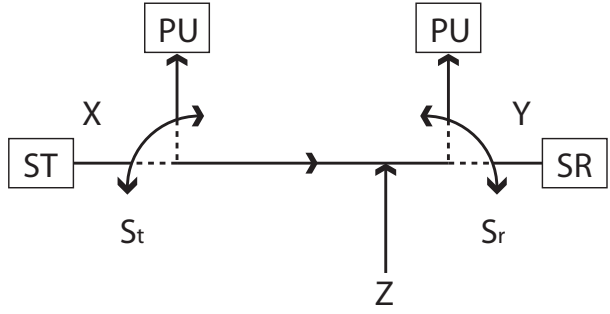}\\
   \end{center}
  \caption{Mathematical model of two-switch model \cite{IEEECM:Srinivasa}.}\label{fig:twoMath}
\end{figure}

The switch state $s_t$ is known only to the transmitter, while $s_r$ is known only to the receiver. Correlation exists between these two switch states as can be seen from Fig. \ref{fig:twoswitch}. They both depend on the PUs that exist in the intersecting sensing regions. The mathematical model can be written as:
\begin{equation}
Y = {s_r}({s_t}X + Z)
\end{equation}
where $Y$ is the received signal, and $X$ and $Z$ are the transmitted signal and noise, respectively \cite{IEEECM:Srinivasa}.

\section{Optimal Linear Estimator Via Cognitive Radio} \label{sec:LOECR}

In this section, we derive the linear optimal estimator for a single CR system as shown in Fig. \ref{fig:estimation}. Due to the independence of each channel, without loss of generality, only one channel in the CR system for is considered for convenience. The case of multiple channels can be extended from the single channel case easily and will be discussed briefly.

\begin{figure}
 \begin{center}
  % Requires \usepackage{graphicx}
  \includegraphics[width=8cm]{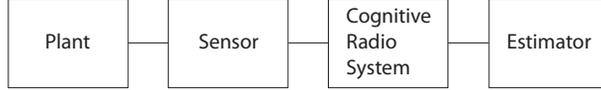}\\
   \end{center}
  \caption{Estimation over cognitive radio system.}\label{fig:estimation}
\end{figure}

\subsection{Problem Formulation}\label{sec:pf}

First, we consider estimation over a single CR link between the sensor and the estimator as shown in Fig. \ref{fig:estimation}. Let us denote by $(\Omega ,\Gamma ,P)$ the probability space induced by the following linear stochastic discrete time-invariant system:
\begin{eqnarray}
{x_k} &=& A{x_{k - 1}} + {\upsilon _k}\nonumber\\
{y_k} &=& s_r^k(s_t^kC{x_k} + {\omega _k})
\label{eq:CREstimationPro}
\end{eqnarray}
where ${x_k} \in {\mathbb{R} ^d}$ is the state at $k$, ${y_k} \in {\mathbb{R} ^l}$ is the observation received, $x_0$ is the initial value of the processes ${\{ {x_k}\} _{k \in \mathbb{N}}}$. ${\upsilon _k} \in {\mathbb{R} ^d}$ and ${\omega _k} \in {\mathbb{R} ^l}$ are independent Gaussian white sequences with zero mean and positive definite covariance matrices $V$ and $W$. The matrices $A$ and $C$ are respectively the state and measurement matrices of the system with appropriate dimensions. $s_{t}^{k}$ and $s_{r}^{k}$ are switching random variables of ST and SR at $k$, respectively. A CR system is located between the sensor and the estimator. Assume that $\{s_{t}^{k}\}_k$ is an independent and identically distributed (i.i.d) Bernoulli random processes with the probability $\mathbb{P}\{ s_t^k = 1\}  = \gamma $, and $\{s_{r}^{k}\}$ is another i.i.d Bernoulli random process with the probability $\mathbb{P}\{ s_r^k = 1\}  = q$. Note that the two Bernoulli variables may depend on each other due to the intersection of the sensing regions. They are assumed to be independent of the state variables and noise signals. The switching variable $s_{t}^{k}$ is not known while $s_{r}^{k}$ is known at the receiver. Denote by ${\{ {I_k}\} _{k \in N}}$ the complete filtration ($\sigma $-algebra) generated by observations $\{ {y_1},\cdots,{y_k},s_r^1,\cdots,s_r^k\} $. Moreover, let $(A,V^{\frac{1}{2}})$ be controllable, $(A,C)$ and $(A,W^{\frac{1}{2}})$ observable \cite{John:Simon,Athe:Berrsekas}.

The optimal estimation problem can be posed as the minimization of the following cost function \cite{John:Simon}:
\begin{equation}
{J_k} = \mathbb{E}\{ {({x_k} - {\hat x _{k|k}})^T}({x_k} - {\hat x _{k|k}})|I_k\}
\label{eq:CREstimationCost}
\end{equation}
with respect to the state estimate ${\hat x _{k|k}}$. Computing this estimate is the subject of the next section.

\subsection{Linear Optimal Estimator}\label{sec:LOE}
In this section, we derive the linear optimal estimator by assuming that the state estimate is a linear function of measurements.

The optimal state estimate is ${\hat x _{k|k}} = \mathbb{E}\{ {x_k}|{I_{k}}\}$ which would be a nonlinear function of the measurement \cite{IEEESPL:xiao}. This requires exponentially increasing memory of computation cost \cite{IEEESPL:xiao}. Thus, we consider a linear formulation in this work. The linear optimal estimator could be derived by using a linear term of measurements in the estimate.

\begin{theorem}\label{thm:loe}
The linear estimator that minimizes the cost function (\ref{eq:CREstimationCost}) is given by:
\begin{eqnarray}
{\hat x _{k|k - 1}} &=& A{\hat x _{k - 1|k - 1}}\label{eq:xpri}\\
{P_{k|k - 1}} &=& A{P_{k - 1|k - 1}}{A^T} + V\label{eq:ppri}\\
{\hat x _{k|k}} &=& {\hat x  _{k|k - 1}} + {K_k}({y_k} - s_r^kpC{\hat x _{k|k - 1}})\label{eq:xpost}\\
{P_{k|k}} &=& {P_{k|k - 1}} - s_r^kp{K_k}C{P_{k|k - 1}}\label{eq:ppost}\\
{K_k} &=& {P_{k|k - 1}}p{C^T}{(p^2 C{P_{k|k - 1}}{C^T} + {W_k^{'}})^{ - 1}}\label{eq:filtergain}\\
{W_k^{'}} &=& \,\,\,W + (p - {p^2})C{X_k}{C^T}\\
{X_{k}} &=& A{X_{k-1}}{A^T} + V
\label{eq:loeCR}
\end{eqnarray}
where ${p} = \mathbb{P}(s_t^k = 1|s_r^k = 1)$ and ${X_1} = {x_1}x_1^T + {P_1}$.
\end{theorem}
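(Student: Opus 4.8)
The plan is to obtain (\ref{eq:xpri})--(\ref{eq:loeCR}) from the orthogonality (projection) principle for least--squares \emph{linear} estimation, arguing by induction on $k$ and absorbing the unknown switch $s_t^k$ into the measurement noise once it is replaced by its conditional mean. Minimising the conditional cost (\ref{eq:CREstimationCost}) over estimators that are linear in the data is, by the projection theorem in $L^2$, the same as taking the linear projection of $x_k$ onto the affine span of $\{y_1,\dots,y_k\}$ with coefficients allowed to depend on the known $\{s_r^1,\dots,s_r^k\}$; so the objects to track are this projection $\hat x_{k|k}$ and its error covariance $P_{k|k}$. For the induction step I would assume $\hat x_{k-1|k-1}$ is the linear projection of $x_{k-1}$ on $I_{k-1}$, with error $\tilde x_{k-1|k-1}:=x_{k-1}-\hat x_{k-1|k-1}$ uncorrelated with every affine functional of the data in $I_{k-1}$ and with $\mathbb{E}[\tilde x_{k-1|k-1}\tilde x_{k-1|k-1}^T]=P_{k-1|k-1}$. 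Since $\upsilon_k$ is zero--mean and independent of $I_{k-1}$, the best linear predictor of $x_k$ is $\hat x_{k|k-1}=A\hat x_{k-1|k-1}$, which is (\ref{eq:xpri}); its error $\tilde x_{k|k-1}=A\tilde x_{k-1|k-1}+\upsilon_k$ is again uncorrelated with $I_{k-1}$ and has covariance $AP_{k-1|k-1}A^T+V$, i.e.\ (\ref{eq:ppri}).

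The measurement update is the heart of the argument. The data $I_k$ contain $s_r^k$, which is \emph{known}, and because $\{s_t^k\}$ and $\{s_r^k\}$ are i.i.d.\ in time and independent of the state and of the driving noises (only the pair $(s_t^k,s_r^k)$ at the \emph{same} instant is correlated), the pair $(s_t^k,s_r^k)$ is independent of $I_{k-1}$, of $x_k$ and of $\omega_k$. Hence, conditioning on $s_r^k$, I would substitute $s_t^k=p+(s_t^k-p)$ with $p=\mathbb{P}(s_t^k=1\mid s_r^k=1)$, obtaining $\mathbb{E}[y_k\mid I_{k-1},s_r^k]=s_r^k\,pC\hat x_{k|k-1}$, so that the innovation $e_k:=y_k-s_r^k pC\hat x_{k|k-1}$ is null on $\{s_r^k=0\}$ and
\begin{equation}
e_k \;=\; pC\,\tilde x_{k|k-1} + (s_t^k-p)\,C x_k + \omega_k \qquad\text{on } \{s_r^k=1\}.
\end{equation}
The linear projection of $x_k$ onto $I_k$ is then $\hat x_{k|k}=\hat x_{k|k-1}+K_k e_k$ with $K_k=\operatorname{Cov}(\tilde x_{k|k-1},e_k)\,\operatorname{Cov}(e_k)^{-1}$, which is (\ref{eq:xpost}).

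Using the stated independence, the term $(s_t^k-p)Cx_k$ has zero conditional mean and is uncorrelated with $\tilde x_{k|k-1}$, with $x_k$ and with $\omega_k$, so all of its cross terms drop out and $\mathbb{E}[(s_t^k-p)^2\mid s_r^k=1]=p-p^2$; this gives $\operatorname{Cov}(\tilde x_{k|k-1},e_k)=pP_{k|k-1}C^T$ and $\operatorname{Cov}(e_k)=p^2CP_{k|k-1}C^T+(p-p^2)CX_kC^T+W$, where $X_k=\mathbb{E}[x_kx_k^T]$ enters precisely through the $(s_t^k-p)Cx_k$ term. These identities are exactly (\ref{eq:filtergain}) and the definition of $W_k'$, and propagating $X_k=\mathbb{E}[(Ax_{k-1}+\upsilon_k)(Ax_{k-1}+\upsilon_k)^T]$ (cross terms vanish since $\upsilon_k$ is independent of $x_{k-1}$) yields (\ref{eq:loeCR}), with initialisation $X_1=\mathbb{E}[x_1x_1^T]=\mathbb{E}[x_1]\mathbb{E}[x_1]^T+P_1$. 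Finally the standard projection identity $P_{k|k}=P_{k|k-1}-K_k\operatorname{Cov}(e_k,\tilde x_{k|k-1})$, together with the explicit factor $s_r^k$ (nothing is updated when $s_r^k=0$), gives $P_{k|k}=P_{k|k-1}-s_r^k pK_kCP_{k|k-1}$, i.e.\ (\ref{eq:ppost}); one then checks that $\tilde x_{k|k}$ is uncorrelated with $I_k$, which closes the induction.

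The step I expect to be the main obstacle is exactly this measurement update: one must argue carefully that conditioning on the \emph{known} $s_r^k$ collapses the effective observation gain to $pC$ rather than $\gamma C$, that the residual $(s_t^k-p)Cx_k$ behaves as a genuine extra white noise with state--dependent intensity (i.e.\ it is uncorrelated with the prediction error and with the past data, which is where the i.i.d.\ and state--independence hypotheses are invoked), and that the explicit $s_r^k$ factors keep the recursion consistent on the event $\{s_r^k=0\}$. The time update and the remaining algebra on $K_k$, $W_k'$ and $P_{k|k}$ are then routine.
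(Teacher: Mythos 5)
Your proposal is correct and follows essentially the same route as the paper: both hinge on replacing the unknown switch $s_t^k$ by its conditional mean $p=\mathbb{P}(s_t^k=1\mid s_r^k=1)$, absorbing the residual $(s_t^k-p)Cx_k$ into an effective measurement noise of covariance $W+(p-p^2)CX_kC^T$ (which is exactly why the second moment $X_k$ enters the recursion), and disposing of the $\{s_r^k=0\}$ case by noting the update vanishes there. The only cosmetic difference is that you obtain $K_k$ from the orthogonality principle as $\operatorname{Cov}(\tilde x_{k|k-1},e_k)\,\operatorname{Cov}(e_k)^{-1}$, whereas the paper writes $P_{k|k}$ in Joseph form as a quadratic in $K_k$ and sets $\partial\,\mathrm{Trace}(P_{k|k})/\partial K_k=0$ --- two standard, equivalent derivations of the same gain.
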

\begin{proof}:
The predictable state, i.e., the estimate at time $k$ based on the observations up to time $k-1$ is:
\[{\hat x _{k|k - 1}} = \mathbb{E}\{ {x_k}|{I_{k - 1}}\}  = A{\hat x  _{k - 1|k - 1}}\]
\begin{eqnarray}
{P_{k|k - 1}} &=& \mathbb{E}\{ ({x_k} - {\hat x _{k|k - 1}}){({x_k} - {\hat x _{k|k - 1}})^T}|{I_{k - 1}}\}\nonumber\\
&=& A{P_{k - 1|k - 1}}{A^T} + V\nonumber
\end{eqnarray}
where ${\hat x _{k|k - 1}}$ is the a priori state estimate and $\hat x  _{k |k}$ is a posteriori state estimate at time $k$, ${P_{k|k - 1}}$ is the covariance of the estimation error of ${x_k} - {\hat x _{k|k - 1}}$;   $P_{k - 1|k - 1}$ is the covariance of the estimation error ${x_k} - {\hat x _{k-1|k - 1}}$.

We assume, as in the traditional Kalman Filter, that the state is linear in the innovation process (which is $({y_k} - s_r^ks_t^kC{\hat x _{k|k - 1}})$), that is $x_k={\hat x _{k|k - 1}} +{K_k}({y_k} - s_r^ks_t^kC{\hat x _{k|k - 1}})$, then we have:
\begin{eqnarray}
{\hat x _{k|k}} &=& \mathbb{E}\{x_k |I_k\}\nonumber\\
&=& {\hat x _{k|k - 1}} +\mathbb{E}\{ {K_k}({y_k} - s_r^ks_t^kC{\hat x _{k|k - 1}})|{I_k}\}\nonumber\\
\label{eq:PosterState}
\end{eqnarray}
where $K_k$ is the linear optimal estimator gain matrix at time $k$ and ${y_k} - s_r^ks_t^kC{\hat x _{k|k - 1}}$ is the innovation process.

In (\ref{eq:PosterState}) except for $s_r^ks_t^kC{\hat x _{k|k - 1}}$, the other terms do not depend on $s_{t}^{k}$ and $s_{r}^{k}$ \textbf{(note that $I_k=\{ {y_1},\cdots,{y_k},s_r^1,\cdots,s_r^k\}$, so $y_k$ is known)}. Thus, (\ref{eq:PosterState}) becomes:
\[\begin{array}{l}
 {\hat x _{k|k}} %= {\hat x _{k|k - 1}} + {K_k}({y_k} - E\{ s_r^ks_t^kC{\hat x _{k|k - 1}}|{I_k}\} )\\
  = {\hat x _{k|k - 1}} + {K_k}({y_k} - s_r^k\mathbb{E}\{ s_t^k|s_r^k\} C{\hat x _{k|k - 1}})
 \end{array}\]
since $s_r^k$ is measurable with respect to $I_k$ and $s_t^k$ only depends on $s_r^k$.

Define ${\varepsilon _{x,k|k}}$ as the estimation error at $k$ with observations up to time $k$; ${\varepsilon _{x,k|k-1}}$ as the estimation error at $k$ with observations up to $k-1$. Then, the estimation error mean can be written as:
\begin{eqnarray}
 \mathbb{E}\{ {\varepsilon _{x,k|k}}|{I_k}\}  &=& \mathbb{E}\{ {x_k} - {\hat x _{k|k}}|{I_k}\}  \\
 &=& \mathbb{E}\{ {x_k} - {\hat x _{k|k - 1}} - {K_k}({y_k} - s_r^k{p_t}C{\hat x _{k|k - 1}})|{I_k}\}  \\
 % = E\{ {\varepsilon _{x,k|k - 1}} - {K_k}(s_r^k(s_t^kC{x_k} + {\omega _k}) - s_r^k{p_t}C{\hat x _{k|k - 1}})|{I_k}\}  \\
 % = E\{ {\varepsilon _{x,k|k - 1}} - s_r^k{K_k}{p_t}C{\varepsilon _{x,k|k - 1}} - s_r^k{K_k}\omega _k^{'}|{I_k}\}  \\
  &=& (I - s_r^k{K_k}{p_t}C)\mathbb{E}\{ {\varepsilon _{x,k|k - 1}}|{I_k}\}  - s_r^k{K_k}\mathbb{E}\{ \omega _k^{'}|{I_k}\}
\end{eqnarray}
where ${p_t} = :\mathbb{E}\{ s_t^k|s_r^k\} $; $\omega _k^{'}\,: = {\omega _k} + (s_t^k - {p_t})C{x_k}$ can be viewed as the new measurement noise. Then, by independence of the state, the noise and $s_t^k$, we have:
\[\begin{array}{l}
\mathbb{E}\{ \omega _k^{'}|{I_k}\}  = \mathbb{E}\{ {\omega _k} + (s_t^k - {p_t})C{x_k}|{I_k}\}  \\
= \mathbb{E}\{ {\omega _k}\}  + \mathbb{E}\{ s_t^k - {p_t}|{I_k}\} \mathbb{E}\{ C{x_k}|{I_k}\}  = 0
 \end{array}\]

Also we have $\mathbb{E}\{ \omega _k^{'}\upsilon _{_k}^T\}  = 0$. Then the estimation error covariance $P_{k|k}$ at time $k$ is:
\[\begin{array}{l}
 {P_{k|k}} = \mathbb{E}\{ {\varepsilon _{x,k|k}}{\varepsilon }_{x,k|k}^T|{I_k}\}  
  %= E\{ [(I - s_r^k{K_k}{p_t}C){\varepsilon _{x,k|k - 1}} - s_r^k{K_k}\omega _k^{'}] \times {[(I - s_r^k{K_k}{p_t}C){\varepsilon _{x,k|k - 1}} - s_r^k{K_k}\omega _k^{'}]^T}|{I_k}\}  
  = (I - s_r^k{K_k}{p_t}C)\mathbb{E}\{ {\varepsilon _{x,k|k - 1}}{\varepsilon}_{x,k|k - 1}^T|{I_k}\} {(I - s_r^k{K_k}{p_t}C)^T} \\
  - s_r^k{K_k}\mathbb{E}\{ \omega _k^{'}\varepsilon _{x,k|k - 1}^T|{I_k}\}\times {(I - s_r^k{K_k}{p_t}C)^T} 
  - (I - s_r^k{K_k}{p_t}C)\mathbb{E}\{ {\varepsilon _{x,k|k - 1}}\omega _k^{{'}T}|{I_k}\} s_r^kK_k^T \\ + s_r^k{K_k}\mathbb{E}\{ \omega _k^{'}\omega _k^{{'}T}|{I_k}\} s_r^kK_k^T
 \end{array}\]

Note that ${\varepsilon _{x,k|k - 1}}$ is the estimation error at time $k$ before receiving the measurement, $\omega _k^{'}$ is combined with the measurement noise $\omega _k$ at time $k$. Thus $\omega _k^{'}$ is independent of ${\varepsilon _{x,k|k - 1}}$. Therefore, $\mathbb{E}\{ \omega _k^{'}\varepsilon _{x,k|k - 1}^T|{I_k}\}  = \mathbb{E}\{ {\varepsilon _{x,k|k - 1}}\omega _k^{{'}T}|{I_k}\}  = 0$ and we have
\begin{eqnarray}
{P_{k|k}} &=& (I - s_r^k{K_k}{p_t}C){P_{k|k - 1}}{(I - s_r^k{K_k}{p_t}C)^T} + s_r^k{K_k}{W_k^{'}}K_k^T
\label{eq:ErrorCov}
\end{eqnarray}
where ${W_k^{'}} = \mathbb{E}\{ \omega _k^{'}\omega _k^{{'}T}|{I_k}\}$ is the variance of ${w_k}^{'}$ and is determined by
\[\begin{array}{l}
 {W_k^{'}} = \,\mathbb{E}\{ \omega _k^{'}\omega _k^{'T}|{I_k}\} 
 = \mathbb{E}\{ ({\omega _k} + (s_t^k - {p_t})C{x_k}){({\omega _k} + (s_t^k - {p_t})C{x_k})^T}|{I_k}\}  \\
  %= W + \,\,\,E\{ (s_t^k - {p_t})C{x_k}x_k^T(s_t^k - {p_t}){C^T}|{I_k}\}\\
  = W + (\mathbb{E}\{ {(s_t^k)^2}|s_r^k\}  - {({p_t})^2})C{X_k}{C^T}
 \end{array}\]
where ${X_k} = \mathbb{E}\{ {x_k}x_k^T|I_k\}$.

Following \cite{Auto:Koning}, we obtain ${X_{k + 1}} = A{X_k}{A^T} + V$, and ${X_1} = {x_1}x_1^T + {P_1}$ to make $\{ {X_k}\} $ a known sequence.

The optimality criterion is to minimize the cost function ${J_k}$ at $k$. Note ${J_k} = Trace({P_{k|k}})$ \cite{John:Simon}. Differentiating $J_k$ with respect to (w.r.t) $K_k$ yields
\begin{equation}
 \frac{{\partial {J_k}}}{{\partial {K_k}}} %&=& \frac{{\partial Trace((I - s_r^k{K_k}{p_t}C){P_{k|k - 1}}{{(I - s_r^k{K_k}{p_t}C)}^T} + s_r^k{K_k}{W_k^{'}}K_k^T)}}{{\partial {K_k}}} \nonumber\\
= 2(I - s_r^k{K_k}{p_t}C){P_{k|k - 1}}( - s_r^k{p_t}{C^T}) + 2s_r^k{K_k}{W_k^{'}}
 \label{eq:MinCostCR}
\end{equation}

Letting (\ref{eq:MinCostCR}) be equal to 0, and solving for $K_k$ yields:
\[{K_k} = {P_{k|k - 1}}{p_t}{C^T}{({p_t}C{P_{k|k - 1}}{p_t}{C^T} + {W_k^{'}})^{ - 1}}\]
and plugging $K_k$ in (\ref{eq:ErrorCov}) gives:
\begin{equation}
{P_{k|k}} = {P_{k|k - 1}} - s_r^kp_t{K_k}C{P_{k|k - 1}}\label{eq:ppost2}
\end{equation}

Next $p_t$ is computed. As $s_t^k \in \{ 0,1\}$, ${p_t} = 1 \times \mathbb{P}(s_t^k = 1|s_r^k) + 0 \times \mathbb{P}(s_t^k = 0|s_r^k) = \mathbb{P}(s_t^k = 1|s_r^k)$, which includes two cases: $s_r^k=0$ and $s_r^k=1$. Note when $s_r^k = 0$, the receiver is inactive to avoid disturbing PUs, so $y_k=0$. Then, the second terms on the right hand side in both (\ref{eq:PosterState}) and (\ref{eq:ppost2}) vanishes, which means when $s_r^k = 0$, ${p_t}$ does not affect the values of the state estimate and error covariance; Thus, we only need to compute the case when $s_r^k = 1$, that is ${p_t} = \mathbb{P}(s_t^k = 1|s_r^k = 1)$ and (\ref{eq:PosterState}) can be represented as:
\[{\hat x _{k|k}} = {\hat x _{k|k - 1}} + {K_k}({y_k} - s_r^k\mathbb{P}(s_t^k = 1|s_r^k = 1)C{\hat x _{k|k - 1}})\]

Similarly in $W^{'}$, the term $E\{ {(s_t^k)^2}|s_r^k\}  = \mathbb{P}(s_t^k = 1|s_r^k)$. Using the same argument as above, we can write:
\begin{eqnarray}
E\{ {(s_t^k)^2}|s_r^k\}  - {({p_t})^2} &=& \mathbb{P}(s_t^k = 1|s_r^k = 1) - \mathbb{P}{(s_t^k = 1|s_r^k = 1)^2}\nonumber
\end{eqnarray}
including both cases. For convenience, we denote $p = \mathbb{P}(s_t^k = 1|s_r^k = 1)$ and finish the proof.
\end{proof}

In the next lemma we compute $p$ in the optimal linear estimator.

\begin{lemma}\label{lemma:probability}
Assume in the two-switch model that there are $h$ independent PUs, $\{ {u_1},...{u_h}\}$ in the sensing regions of ST only, and another $m$ independent PUs $\{ {u_{h + 1}},\cdots, {u_{h + m}}\}$ in the intersection of the sensing regions of ST and SR, and another $o$ independent PUs $\{ {u_{h + m + 1}},\cdots ,{u_{h + m + o}}\}$ in the receiver sensing region of SR only. Let $\{ {p_1},\cdots ,{p_{h + m + o}}\}$ denote the probabilities that the PUs are inactive, respectively. Then,
\begin{equation}
p = \mathbb{P}(s_t^k = 1|s_r^k = 1) = \mathop \Pi \limits_{i = 1}^h {p_i}
\label{eq:Probability}
\end{equation}
(\ref{eq:Probability}) means that the optimal linear estimator depends on the probabilities of inactive PUs that only exist in the sensing region of ST.
\end{lemma}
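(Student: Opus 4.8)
The plan is to rewrite both the joint event $\{s_t^k = 1,\, s_r^k = 1\}$ and the conditioning event $\{s_r^k = 1\}$ in terms of the underlying PU activity indicators, and then apply the definition of conditional probability together with the assumed independence of the PUs.

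First I would translate the sensing semantics of Section \ref{sec:twoSwitch} into events. By construction $s_t^k = 1$ precisely when every PU in ST's sensing region is inactive; that region contains the $h$ transmitter-only PUs $\{u_1,\dots,u_h\}$ and the $m$ shared PUs $\{u_{h+1},\dots,u_{h+m}\}$, so $\{s_t^k = 1\} = \bigcap_{i=1}^{h+m}\{u_i \text{ inactive}\}$. Symmetrically, $s_r^k = 1$ precisely when every PU in SR's sensing region is inactive, and that region contains the $m$ shared PUs together with the $o$ receiver-only PUs, so $\{s_r^k = 1\} = \bigcap_{i=h+1}^{h+m+o}\{u_i \text{ inactive}\}$.

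Next I would compute the two probabilities using independence of the PUs: $\mathbb{P}(s_t^k = 1,\, s_r^k = 1) = \mathbb{P}\big(\bigcap_{i=1}^{h+m+o}\{u_i \text{ inactive}\}\big) = \prod_{i=1}^{h+m+o} p_i$, while $\mathbb{P}(s_r^k = 1) = \prod_{i=h+1}^{h+m+o} p_i$. Dividing, the definition of conditional probability yields $p = \mathbb{P}(s_t^k = 1 \mid s_r^k = 1) = \big(\prod_{i=1}^{h+m+o} p_i\big)\big/\big(\prod_{i=h+1}^{h+m+o} p_i\big) = \prod_{i=1}^{h} p_i$, which is (\ref{eq:Probability}). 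Implicitly one assumes each $p_i > 0$ so that the conditioning event has positive probability; otherwise $\mathbb{P}(s_r^k = 1) = 0$ and the conditional probability is undefined.

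The calculation is short; the only step requiring care is the bookkeeping of which PUs each switch actually observes. The reason the final expression is so clean is structural: the shared PUs and the receiver-only PUs all appear inside the event $\{s_r^k = 1\}$, so conditioning on $s_r^k = 1$ already forces them inactive and their probabilities cancel from the ratio, leaving exactly the $h$ transmitter-only PUs. I do not anticipate a genuine obstacle here --- the content is just independence plus this cancellation --- so the write-up should focus on the event identifications above and note the $p_i > 0$ caveat.
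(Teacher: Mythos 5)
Your proposal is correct and follows essentially the same route as the paper: identify $\{s_t^k=1\}$ and $\{s_r^k=1\}$ as intersections of independent PU-inactivity events, compute the joint and marginal probabilities as products, and cancel in the ratio defining the conditional probability. Your explicit event identification and the $p_i>0$ caveat are minor refinements the paper leaves implicit, but the argument is the same.
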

\begin{proof}
Note that the probabilities that ST can transmit and SR receive are:
\begin{eqnarray}
\mathbb{P}(s_t^k = 1) &=& \mathop \Pi \limits_{i = 1}^{h + m} {p_i}  \\
\mathbb{P}(s_r^k = 1) &=& \mathop \Pi \limits_{i = h + 1}^{h + m + o} {p_i}
\end{eqnarray}
respectively. It follows,
\begin{eqnarray}
\mathbb{P}(s_t^k = 1|s_r^k = 1) &=& \frac{{\mathbb{P}(s_t^k = 1,s_r^k = 1)}}{{\mathbb{P}(s_r^k = 1)}} \nonumber\\
&=& \frac{{\mathop \Pi \limits_{i = 1}^{h + m + o} {p_i}}}{{\mathop \Pi \limits_{i = h + 1}^{h + m + o} {p_i}}} = \mathop \Pi \limits_{i = 1}^h {p_i}\nonumber
\end{eqnarray}
proving the Lemma.
\end{proof}

\subsection{Example: Application to an Inverted Pendulum-Cart System}
To illustrate the performance of the optimal linear estimator, an application to estimate the states of an inverted pendulum-cart system via a CR system is proposed.

The parameters of the system to be estimated are given by:
\begin{eqnarray}
A &=& \left[ \begin{array}{l}
 {\rm{1}}{\rm{.0000 \ \ \ - 0}}{\rm{.0002 \ \ \ 0}}{\rm{.0010 \ \ \ - 0}}{\rm{.0000}} \\
 {\rm{0}}{\rm{.0000 \ \ \ 0}}{\rm{.9996 \ \ \ 0}}{\rm{.0001\ \ \ 0}}{\rm{.0010}} \\
 {\rm{0}}{\rm{.0315 \ \ \ - 0}}{\rm{.3901 \ \ \ 1}}{\rm{.0518 \ \ \- 0}}{\rm{.0417}} \\
 {\rm{0}}{\rm{.0726 \ \ \ - 0}}{\rm{.8763 \ \ \ 0}}{\rm{.1193 \ \ \ 0}}{\rm{.9038}} \\
 \end{array} \right] 
 \\
C &=& \left[ \begin{array}{l}
 {\rm{1 \ \ 0 \ \ 0 \ \ 0}} \\
 {\rm{0 \ \ 1 \ \ 0 \ \ 0}} \\
 \end{array} \right]
\\ 
W &=& \left[ \begin{array}{l}
 {\rm{0}}{\rm{.001   \ \  0}} \\
 {\rm{     0  \ \ 0}}{\rm{.001}} \\
 \end{array} \right] \\
V &=& \left[ \begin{array}{l}
 {\rm{0}}{\rm{.0100 \ \  0}}{\rm{.0090 \ \  0}}{\rm{.0020  \ \ 0}}{\rm{.0050}} \\
 {\rm{0}}{\rm{.0060 \ \  0}}{\rm{.0100  \ \  0}}{\rm{.0080  \ \ 0}}{\rm{.0060}} \\
 {\rm{0}}{\rm{.0040 \ \ 0}}{\rm{.0080  \ \ 0}}{\rm{.0030  \ \ 0}}{\rm{.0070}} \\
 {\rm{0}}{\rm{.0090 \ \  0}}{\rm{.0040 \ \  0}}{\rm{.0050  \ \ 0}}{\rm{.0100}} \\
 \end{array} \right]
\end{eqnarray}

Here $x = {[s;\,\,\theta ;\,\,v;\,\,\omega ]^T}$ is the state vector, with $s$ the position of the cart; $\theta$ the angle of the pendulum with the vertical line; $v$ the velocity of the cart; and $\omega$ the angular velocity of the pendulum.

The output signals are the position and angle of the inverted pendulum. The position should act as the reference signal at approximately 1m and the angle is set close to 0. The two switch model cognitive radio system in Fig. \ref{fig:twoswitch} is considered. Three PUs are detected in the sensing regions, assuming that ${p_1} = {p_2} = {p_3} = 0.8$. Using the proposed estimator we obtain the estimates of the position and angle shown in Fig. \ref{fig:estimationPer}. From the figure, it is obvious that the estimated states converge to the real ones.

\begin{figure}
 \begin{center}
  % Requires \usepackage{graphicx}
  \includegraphics[width=8cm]{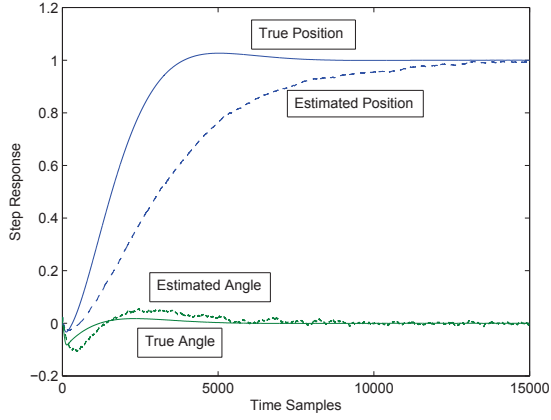}\\
   \end{center}
  \caption{State estimates of the position and the angle.}\label{fig:estimationPer}
\end{figure}

\subsection{Multi-channels Case}\label{sec:mul}

In the previous sections, we discussed one channel sensed a CR system, due to the independence of each channel, the results can be easily extended to the case of multiple channels. The optimal linear estimator is the same as derived in theorem \ref{thm:loe}. The difference is that the term $p = E\{ s_t^k|s_r^k\}$ needs to be recomputed. Assume that both the transmitter and the receiver choose the same channel to sense at each time. In this case, $s_t^k=s_{t}^{i_k}$ and $s_r^k=s_{r}^{i_k}$, where $i_k\in \{1,\cdots ,N\}$ represents the channel chosen to be sensed at time $k$. Thus, we have $p_k = \mathbb{P}\{ s_{t}^{i_k}=1|s_{r}^{i_k}=1\}$ following the previous discussion. This probability can be calculated as shown in lemma \ref{lemma:probability} for each $i_k\in\{1,\cdots ,N\}$.

If we consider how to choose the channel when the current channel is occupied by the PU, which means the correlation between channels could not be neglected. This requires a sensing strategy for the switch between channels, which would introduce more complexity to the probability analysis.
\begin{figure}
 \begin{center}
  % Requires \usepackage{graphicx}
  \includegraphics[width=8cm]{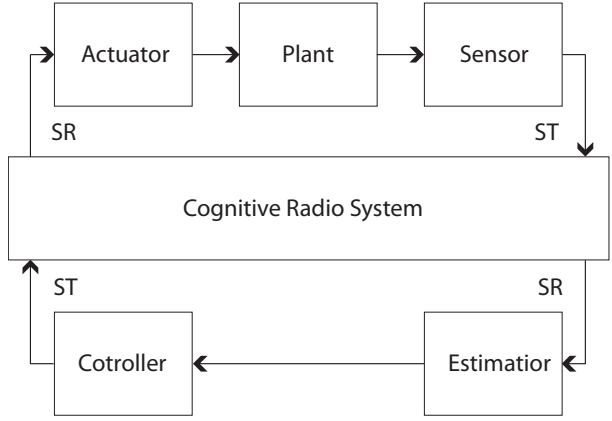}\\
   \end{center}
  \caption{Closed-Loop system.}\label{fig:closed}
\end{figure}
\section{Estimation and Control Through Cognitive Radio}\label{sec:ctrE}
\subsection{Estimation}
In this section we consider estimation and control of the closed-loop system when CR links exist between both the sensor to the estimator, and controller to the actuator, as shown in Fig. \ref{fig:closed}. We still focus on the case of one channel in the CR system, as the case of multiple channels can be addressed similarly as discussed in section 3.4. There are two STs, located at the sensor and controller ends, respectively, similarly for two SRs at estimator and actuator ends. Observe that the sensing variables are the same at the estimator and at the controller (the estimator and the controller are in the same location), thus for convenience we use $s_r^k$ to denote the sensing variable. Similarly, we use $s_t^k$ for the sensing variable of the actuator and sensor.

The system represented in Fig. \ref{fig:closed} becomes:
\begin{eqnarray}
{x_{k + 1}} &=& A{x_k} + Bs_t^k(s_r^k{u_k} + {\upsilon _k})\nonumber\\
{y_k} &=& s_r^k(s_t^kC{x_k} + {\omega _k})
\label{eq:closedPr}
\end{eqnarray}

The information set becomes $I_k=\{y_1,\cdots ,y_k,s_r^1,\cdots ,s_r^k,u_1,\cdots ,u_{k-1}\}$. The optimal linear estimator for this system is to minimize the cost function defined in section \ref{sec:pf} by assuming that the state estimate is a linear combination of measurements.

The a priori state estimate can be computed similarly as follows:
\begin{eqnarray}
{\hat x _{k + 1|k}} &=& A{\hat x _{k|k}} + ps_r^kB{u_k}\nonumber\\
{P_{k + 1|k}} &=& A{P_{k|k}}{A^T} + p(1 - p)s_r^kB{u_k}u_k^T{B^T} \nonumber\\
&&+ p_d BV{B^T}
\end{eqnarray}
where $p_d=p$ when $s_r^{k} = 1$ and $p_d=\mathbb{P}(s_t^k = 1|s_r^k = 0)$ when $s_r^{k} = 0$, the latter probability can be computed as in lemma \ref{lemma:probability} and $\mathbb{P}(s_t^k = 1|s_r^k = 0)=({\mathop \Pi \limits_{i = 1}^{n + m} {p_i}(1-\mathop \Pi \limits_{n + m + 1}^{n + m + o} {p_i})})/({1-\mathop \Pi \limits_{n + 1}^{n + m + o} {p_i}})$.

After receiving the measurement the a posteriori state estimate is obtained:
\begin{eqnarray}
{\hat x _{k + 1|k + 1}} &=& {\hat x _{k + 1|k}} + {K_{k + 1}}({y_{k + 1}} - s_r^{k + 1}pC{\hat x _{k + 1|k}})\nonumber\\
W_{k + 1}^{'} &=& \,\,\,W + (p - {(p)^2})C{X_{k + 1}}{C^T}\nonumber\\
{P_{k + 1|k + 1}} &=& {P_{k + 1|k}} - s_r^{k + 1}p{K_{k + 1}}C{P_{k + 1|k}}\nonumber\\
{K_{k + 1}} &=& {P_{k + 1|k}}p{C^T}{(pC{P_{k + 1|k}}p{C^T} + W_{k + 1}^{'})^{ - 1}}\nonumber\\
{X_{k + 1}} &=& \mathbb{E}\{ {x_{k + 1}}x_{k + 1}^T|I_{k+1}\}
\label{eq:EstimatorClose}
\end{eqnarray}

After additional computations from (\ref{eq:EstimatorClose}) we get:
\[\begin{array}{l}
 {X_{k + 1}} = \mathbb{E}\{ {x_{k + 1}}x_{k + 1}^T|{I_{k+1}}\}  
  %  = \mathbb{E}\{ (A{x_k} + Bs_t^k(s_r^k{u_k} + {\upsilon _k})){(A{x_k} + Bs_t^k(s_r^k{u_k} + {\upsilon _k}))^T}|{I_{k+1}}\}  \\
  %= E\{ (A{x_k} + Bs_t^k(s_r^k{u_k})){( \cdots )^T}|{I_k}\}  + p_d BV{B^T} \\
 = E\{ (A{\hat x _{k|k}} + A{e_{k|k}} + Bs_t^k(s_r^k{u_k})){( \cdots )^T}|{I_k}\}   + p_d BV{B^T} \\
%  = E\{ (A{\hat x _{k|k}} + Bs_t^k(s_r^k{u_k})){( \cdots )^T}|{I_k}\}  + p_d BV{B^T} + A{P_{k|k}}{A^T} \\
  = (A{\hat x _{k|k}} + ps_r^kB{u_k}){( \cdots )^T} + p_d BV{B^T} + A{P_{k|k}}{A^T}  + (p - {p^2})B{u_k}u_k^T{B^T} \\
  = (A{\hat x _{k|k}} + ps_r^kB{u_k}){(A{\hat x _{k|k}} + ps_r^kB{u_k})^T} + {P_{k + 1|k}} \\
 \end{array}\]

In the next section, control design of the closed-loop system is introduced.

\subsection{Control}

It can be seen from the above that the error covariance is a function of the control input, this implies that the separation principle does not hold. An example is provided to illustrate that it is indeed the case. Assume a SISO system with $A=1$, $B=1$, $C=1$, $W=1$ and $V=0$. Consider the value function defined as:
\begin{eqnarray}
{V_N}({x_N}) &=& \mathbb{E}\{ {x_N}^T{Q_N}{x_N}|{I_N}\}\nonumber\\
{V_k}({x_k}) &=& \mathop {\min }\limits_{{u_k}} \mathbb{E}\{ {x_k}^T{Q_k}{x_k} + s_t^ks_r^k{u_k}^T{R_k}{u_k} + {V_{k + 1}}({x_{k + 1}})|{I_k}\}
\end{eqnarray}

Also assume $Q_N=Q_k=1$ and $R=0$.

When $k=N$, ${V_N}({x_N}) = \mathbb{E}\{ {x^2}_N|{I_N}\}$. When $k=N-1$,
\[\begin{array}{l}
 {V_{N - 1}}({x_{N - 1}}) = \mathop {\min }\limits_{{u_{N - 1}}} \mathbb{E}\{ ({x_{N - 1}}^2 + {V_N}({x_N}))|{I_{N - 1}}\}  \\
  %= \mathop {\min }\limits_{{u_{N - 1}}} \mathbb{E}\{ (2{x_{N - 1}}^2 + 2s_t^{N - 1}s_r^{N - 1}{u_{N - 1}}{x_{N - 1}} + s_t^{N - 1}s_r^{N - 1}{u_{N - 1}}^2)|{I_{N - 1}}\}  \\
  = \mathbb{E}\{ (2{x_{N - 1}}^2)|{I_{N - 1}}\}  + \mathop {\min }\limits_{{u_{N - 1}}} \{ 2ps_r^{N - 1}{u_{N - 1}}{\hat x _{N - 1|N - 1}} + ps_r^{N - 1}{u_{N - 1}}^2\}  \\
 \end{array}\]

To compute the control action for step $k=N-1, N-2$, we only need to differentiate the above equation on both sides with respect to $u_{N-1}$. The steps are omitted and can be found in \cite{ACC:xiao}. We also give a detailed derivation of ${V_{N - 1}}({x_{N - 1}})$ to show that it is a function of the error covariance, and thus the separation principle does not hold.
 %to get $\frac{{\partial {V_{N - 1}}}}{{\partial {u_{N - 1}}}} = 0 \Rightarrow u_{_{N - 1}}^ * \, =  - {\hat x _{N - 1|N - 1}}$ and:
%\begin{eqnarray}
%{V_{N - 1}}({x_{N - 1}}) = \mathbb{E}\{ (2{x_{N - 1}}^2)|{I_{N - 1}}\}  - s_r^{N - 1}p{\hat {x^2} _{N - 1|N - 1}} \nonumber\\
%= \mathbb{E}\{ (2 - s_r^{N - 1}p){x_{N - 1}}^2|{I_{N - 1}}\}+ s_r^{N - 1}p{P_{N - 1|N - 1}}
%\end{eqnarray}

%When $k=N-2$, we have:
%\[\begin{array}{l}
% {V_{N - 2}}({x_{N - 2}}) = \mathop {\min }\limits_{{u_{N - 2}}} \mathbb{E}\{ ({x_{N - 2}}^2 + {V_{N - 1}}({x_{N - 1}}))|{I_{N - 2}}\} = \mathbb{E}\{ (3 - s_r^{N - 1}p){x_{N - 2}}^2|{I_{N - 2}}\} \\
%  + pq(1 - q){P_{N - 2|N - 2}} + \frac{{{q^2}}}{p} + \mathop {\min }\limits_{{u_{N - 2}}} \{ 2ps_r^{N - 2}(2 - pq){\hat x _{N - 2|N - 2}}
%  + ps_r^{N - 2}(2 - pq){u_{N - 2}}^2 \\
%  + {p^2}q(1 - p)(1 - q)s_r^{N - 2}{u_{N - 2}}^2 - {q^2}(1 - p){X_{N - 1}} +  \frac{{2{q^2}W{'}(p{P_{N - 2|N - 2}} + {p^2}(1 - p)s_r^{N - 2}{u_{N - 2}}^2)}}{D} + \frac{{{q^2}W{'}}}{{pD}}\}  \\
% \end{array}\]
%where $W{'} = W + p(1 - p){X_{N - 1}}$, ${X_{N - 1}} = ({\hat x _{N - 2|N - 2}} + ps_r^{N - 2}{u_{N - 2}}){({\hat x _{N - 2|N - 2}} + ps_r^{N - 2}{u_{N - 2}})^T} + {P_{N - 1|N - 2}}$, $D = {p^2}({P_{N - 2|N - 2}} + p(1 - p)s_r^{N - 2}{u_{N - 2}}^2) + W{'}$.

%It is obvious from the minimization of cost function $V_{N - 2}$, when $p\neq 1$, that the optimal control action $u_{N - 2}^ * $ is a nonlinear function of the state estimate $\hat x _{N - 2|N - 2}$. Moreover, it is a function of the error covariance $P_{N-2|N-2}$ which means the separation principle does not hold.

The only case where the optimal controller is a linear gain of the state estimate is when $p=1$, which means that no PU exists in the transmitter sensing region. Using Fig. \ref{fig:twoswitch} as an example, when the PU A does not exist or exists in the intersection of the transmitter and receiver's sensing regions, there is an optimal controller that is a linear function of the state estimate. This provides us with an interesting insight: In order to obtain a linear optimal controller, the receiver should be located at a position where all PUs are covered by its sensing region.

\subsection{Stochastic Closed-Loop System Stability}

As seen above, the optimal controller depends on the estimation error covariance and is in fact a nonlinear function of the state estimate. It is therefore not obvious to study in detail the stability of the optimal closed-loop without an explicit expression of the controller. To simplify the problem we assume a suboptimal controller that is a linear function of the state estimate of the form ${u_k} =  - F{\hat x _{k|k - 1}}$, where $F$ is a constant matrix that is chosen such that $A-BF$ is stable. We are going to derive stability conditions of the corresponding closed-loop. Note in this case the error covariance is still a function of the control input. The state equations of the closed-loop system are derived as:
\begin{eqnarray}
{\hat x _{k + 1}} &=& (A - ps_r^kBF - pA{K_k}s_r^kC){\hat x _k} + A{K_k}s_r^ks_t^kC{x_k} + A{K_k}s_r^k{\omega _k}\nonumber\\
{x_{k + 1}} &=& A{x_k} - s_r^ks_t^kBF{\hat x _k} + Bs_t^k{\upsilon _k}
\end{eqnarray}
where ${\hat x _{k + 1}}\,: = {\hat x _{k + 1|k}}$.

Define
\begin{equation}
e_{k + 1}:=\varepsilon_{x,k+1|k}={x_{k+1}} - {\hat x _{k+1|k}}
\end{equation}
Subtracting the equations above and incorporating them in the closed-loop system:
\begin{eqnarray}
&&\left[ \begin{array}{l}
 {e_{k + 1}} \\
 {\hat{x}_{k + 1}} \\
 \end{array} \right] = \left[ \begin{array}{l}
 A - A{K_k}s_r^ks_t^kC \ \ \\
 A{K_k}s_r^ks_t^kC  \ \ \\
 \end{array} \right.\nonumber\\
 &&\left. \begin{array}{l}
 (A{K_k}C + BF)s_r^k(p - s_t^k) \\
 A - ps_r^kBF- A{K_k}Cs_r^k(p - s_t^k)\\
 \end{array} \right]
 \times \left[ \begin{array}{l}
 {e_k} \\
 {\hat{x}_k} \\
 \end{array} \right] \nonumber\\
 &&+ \left[ \begin{array}{l}
 Bs_t^k \ \  \\
 0 \ \ \\
 \end{array} \right.\left. \begin{array}{l}
  - A{K_k}s_r^k \\
 A{K_k}s_r^k \\
 \end{array} \right]\left[ \begin{array}{l}
 {\upsilon _k} \\
 {\omega _k} \\
 \end{array} \right]
 \label{eq:closed}
\end{eqnarray}

The conditions for the mean stability \cite{Auto:Koning} of the closed-loop system are given in the following Theorem.

\begin{theorem}\label{thm:meanstable}
The closed-loop system equation (\ref{eq:closed}) is m-stable (mean stable) if and only if the following conditions are satisfied:\\
(i) $|\rho (A - pqBF)| < 1$;\\
(ii) $|\rho (A - pqA{\tilde{K} _k}C)| < 1,\forall N$, for all $k \ge N$.\\
where $\rho (Z)$ represent the spectral radius of the matrix $Z$ and ${\tilde{K} _k} = \mathbb{E}\{ {K_k}\}$,where $K_k$ is a function of $\{ s_r^1,...,s_r^{k - 1}\}$ computed in section 3.2.
\end{theorem}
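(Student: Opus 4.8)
The plan is to pass from the stochastic recursion (\ref{eq:closed}) to a deterministic linear recursion for the mean $\mathbb{E}\{[e_k;\,\hat{x}_k]^T\}$ and read the stability conditions off the structure of the resulting mean system matrix. Write $z_k:=[e_k;\,\hat{x}_k]^T$ so that (\ref{eq:closed}) reads $z_{k+1}=\Phi_k z_k+\Gamma_k[\upsilon_k;\,\omega_k]^T$ with $\Phi_k,\Gamma_k$ the random block matrices displayed there. First I would condition on the $\sigma$-algebra $\mathcal{F}_{k-1}$ generated by $\{s_t^j,s_r^j,\upsilon_j,\omega_j:j\le k-1\}$, with respect to which both $z_k$ and the gain $K_k$ are measurable, since by Theorem \ref{thm:loe} $K_k$ is a function of $s_r^1,\dots,s_r^{k-1}$. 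Using that $(s_t^k,s_r^k)$ is independent of $\mathcal{F}_{k-1}$ and of $(\upsilon_k,\omega_k)$, and that the noises are zero mean, the term $\mathbb{E}\{\Gamma_k[\upsilon_k;\,\omega_k]^T\mid\mathcal{F}_{k-1}\}$ vanishes.

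The key computation is the conditional mean of the random scalars appearing in $\Phi_k$: from $\mathbb{P}\{s_r^k=1\}=q$ and $\mathbb{P}\{s_t^k=1\mid s_r^k=1\}=p$ one gets $\mathbb{E}\{s_r^k s_t^k\}=pq$, hence $\mathbb{E}\{s_r^k(p-s_t^k)\}=pq-pq=0$. Substituting, the ``$(p-s_t^k)$'' off-diagonal blocks disappear and $\mathbb{E}\{\Phi_k\mid\mathcal{F}_{k-1}\}$ becomes the block lower-triangular matrix with diagonal blocks $A-pqA K_k C$ and $A-pqBF$. Taking total expectation and writing $\tilde{K}_k=\mathbb{E}\{K_k\}$, the mean recursion decouples into $\mathbb{E}\{e_{k+1}\}=(A-pqA\tilde{K}_k C)\,\mathbb{E}\{e_k\}$ and $\mathbb{E}\{\hat{x}_{k+1}\}=pqA\tilde{K}_k C\,\mathbb{E}\{e_k\}+(A-pqBF)\,\mathbb{E}\{\hat{x}_k\}$. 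Thus $\mathbb{E}\{e_k\}\to 0$ iff the product $\prod_k(A-pqA\tilde{K}_k C)$ tends to zero, which — the gain sequence settling to its steady-state behaviour — is equivalent to (ii); and once $\mathbb{E}\{e_k\}\to 0$, the $\hat{x}$-equation is a system with system matrix $A-pqBF$ driven by a vanishing forcing term, so $\mathbb{E}\{\hat{x}_k\}\to 0$ iff $A-pqBF$ is Schur, i.e. (i). For the ``only if'' direction I would run the same recursion from an initial condition that excites the unstable mode of whichever of $A-pqBF$, $A-pqA\tilde{K}_k C$ violates its spectral bound and note the mean cannot vanish. The formal notion of m-stability and the elementary facts about products of (block-triangular) matrices are borrowed from \cite{Auto:Koning}.

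The main obstacle is the statistical coupling between the random gain $K_k$ and the augmented state $z_k$: both depend on the same history $s_r^1,\dots,s_r^{k-1}$, so $\mathbb{E}\{AK_k C e_k\}$ is not literally $A\tilde{K}_k C\,\mathbb{E}\{e_k\}$, and the replacement of $\mathbb{E}\{K_k\}$ by $\tilde{K}_k$ in the mean recursion must be justified — either by an asymptotic argument that $K_k$ concentrates around a deterministic limit (equivalently, that the recursion for the relevant moment of $K_k$ converges) or by a bound showing the residual cross-covariance is dominated by the contracting dynamics. A secondary point is that a spectral-radius condition controls a \emph{time-varying} product only when the sequence is eventually (nearly) constant or admits a common Lyapunov function, so I would also need convergence of $\tilde{K}_k$, which reduces to convergence of the expected estimation-error covariance of Theorem \ref{thm:loe}; this is where the detectability/stabilizability hypotheses on $(A,C)$, $(A,V^{1/2})$ and the choice of $F$ with $A-BF$ stable are actually used.
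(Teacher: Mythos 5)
Your proposal follows essentially the same route as the paper: take expectations of (\ref{eq:closed}), use $\mathbb{E}\{s_r^k(p-s_t^k)\}=pq-pq=0$ and $\mathbb{E}\{K_k s_r^k s_t^k\}=pq\tilde{K}_k$ to obtain the block lower-triangular mean recursion (\ref{eq:expClosed}), and read conditions (i) and (ii) off its diagonal blocks. The two obstacles you flag at the end --- that $K_k$ and $e_k$ are correlated through the shared history $s_r^1,\dots,s_r^{k-1}$, so $\mathbb{E}\{K_k C e_k\}$ is not literally $\tilde{K}_k C\,\mathbb{E}\{e_k\}$, and that a per-step spectral-radius bound does not by itself control a time-varying matrix product --- are genuine, but the paper's own proof passes over both in silence, so your argument is, if anything, more explicit about what remains to be justified than the one in the text.
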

\begin{proof}
By definition of mean stability \cite{Auto:Koning}, taking the expectation of both sides of (\ref{eq:closed}) yields:
\begin{eqnarray}
\mathbb{E}\left\{ {\left[ \begin{array}{l}
 {e_{k + 1}} \\
 {\hat{x}_{k + 1}} \\
 \end{array} \right]} \right\} &=& \left[ \begin{array}{l}
 A - pqA{\tilde{K} _k}C\ \ \\
 pqA{\tilde{K} _k}C\ \ \\
 \end{array} \right.\left. \begin{array}{l}
 \mathop {\begin{array}{*{20}{c}}
   {} & 0  \\
\end{array}}\limits^{}  \\
 \mathop {}\limits^{} A - pqBF \\
 \end{array} \right]
 \times\mathbb{E}\left\{ {\left[ \begin{array}{l}
 {e_k} \\
 {\hat{x}_k} \\
 \end{array} \right]} \right\}
 \label{eq:expClosed}
\end{eqnarray}
where $\tilde{K} _k$ comes from $\mathbb{E}\{ {K_k}s_r^ks_t^k\}  = pq\mathbb{E}\{ {K_k}\}  = pq{\tilde{K} _k}$.

From (\ref{eq:expClosed}) the m-stability conditions (i) and (ii) of Theorem \ref{thm:meanstable} are established.
\end{proof}

\begin{remark}:
Condition (i) can be used as a necessary condition for the closed-loop stability for $F$ when $p$, $q$ are known. It provides a way to update the suboptimal linear controller for known $p$, $q$, with a new gain $\tilde{F}$ as to stabilize ${x_{k + 1}} = A{x_k} + pqB{u_k}$. We show in the numerical examples that this new gain will improve the performance of the closed-loop system. In condition (ii), $\{\tilde{K}_k\}_{k\geq 0}$ is a deterministic time varying sequence of matrices that can be computed as:
\begin{eqnarray}
\tilde{K} _k &=& q\mathbb{E}\{ {K_k}|\,s_r^{k - 1} = 1\}  + (1 - q)\mathbb{E}\{ {K_k}|\,s_r^{k - 1} = 0\} \nonumber\\
&=& q\mathbb{E}\{ \tilde{K}_k^1\} + (1 - q)\mathbb{E}\{ \tilde{K}_k^0\} \nonumber
\end{eqnarray}
where $\tilde{K_k^1}$ and $\tilde{K_k^0}$ are functions of $\{ s_r^1, \cdots ,s_r^{k - 2}\} $ and can be computed by letting $s_r^{k-1}=1$ and $s_r^{k-1}=0$ back into the estimator equations (\ref{eq:loeCR}), respectively. Similarly, $\mathbb{E}\{ \tilde{K}_k^1 \}  = q\mathbb{E}\{ \tilde{K}_k^1 |s_r^{k - 2} = 1\,\,\}  + (1 - q) \times \mathbb{E}\{ \tilde{K}_k^1 |s_r^{k - 2} = 0\}$, $\mathbb{E}\{ \tilde{K}_k^0 \}  = q\mathbb{E}\{ \tilde{K}_k^0 |\,s_r^{k - 2} = 1\}  + (1 - q) \times \mathbb{E}\{ \tilde{K}_k^0 |s_r^{k - 2} = 0\}$, and so on. Thus, applying the same reasoning from $s_r^{k-1}$ to $s_r^1$ and using equations (\ref{eq:loeCR}), $\tilde{K}_k$ is obtained.
\end{remark}

Next, we turn to a special but simplified case. This is the case when $p=1$, (\ref{eq:closedPr}) reduces to:
\begin{eqnarray}
{x_{k + 1}} &=& A{x_k} + Bs_r^k{u_k} + Bs_i^k{\upsilon _k}\nonumber\\
{y_k} &=& s_r^k(C{x_k} + {\omega _k})
\end{eqnarray}
where $s_i^k$ represents whether PUs in the intersection region of the ST and SR are active or not. The problem then becomes a packet loss problem that has been considered in \cite{IEEE:Schenato,NOLC:Imer}. However, the calculation of the optimal controller needs the exact value of $q$ which is hard to predict in CR systems as it is governed by the PUs' behavior. We give sufficient stability conditions of the so-called peak covariance process which can be viewed as an estimate of filtering deterioration caused by disruptions from PUs. First, we introduce the following definition then the conditions are provided in lemma \ref{lemma:Sufftwoswitch}.

\begin{definition}
Assume that $(A,B)$ is controllable, and $(A,C)$ is observable. The observability and controllability index are the smallest integer $I_0$ and $I_1$ such that $[{C'},{A'}{C'},\cdots ,{({A^{{I_0} - 1}})'}{C'}]$ and $[B,AB,\cdots ,({A^{{I_1} - 1}})B]$ have rank $n$, respectively \cite{Auto:Huang}.
\end{definition}

When $p=1$, (\ref{eq:closed}) becomes:
\begin{eqnarray}
\left[ \begin{array}{l}
 {e_{k + 1}} \\
 {\hat{x}_{k + 1}} \\
 \end{array} \right] &=& \left[ \begin{array}{l}
 A - A{K_k}s_r^kC\ \ \\
 A{K_k}s_r^kC\ \ \\
 \end{array} \right.\left. \begin{array}{l}
 0 \\
 A - s_r^kBF \\
 \end{array} \right]\nonumber\\
 \times\left[ \begin{array}{l}
 {e_k} \\
 {\hat{x}_k} \\
 \end{array} \right]
 &+& \left[ \begin{array}{l}
 Bs_i^k\ \   \\
 0\ \  \\
 \end{array} \right.\left. \begin{array}{l}
  - A{K_k}s_r^k \\
 A{K_k}s_r^k \\
 \end{array} \right]\left[ \begin{array}{l}
 {\upsilon _k} \\
 {\omega _k} \\
 \end{array} \right]
 \end{eqnarray}

Let ${L_{k + 1}} = \mathbb{E}\left\{ \left[ \begin{array}{l}
 {e_{k + 1}} \\
 {\hat{x}_{k + 1}} \\
 \end{array} \right]{\left[ \begin{array}{l}
 {e_{k + 1}} \\
 {\hat{x}_{k + 1}} \\
 \end{array} \right]^T}|{I_k}\right\}$. Assume the initial condition $s_r^1=1$. The following two stopping times are introduced \cite{Auto:Huang}:
\begin{center}
$\alpha_1= \inf\{k:k>1,s_r^k=0\}$.\\
$\beta_1 = \inf\{k:k>\alpha_1,s_r^k=1\}$.
\end{center}

Thus $\alpha_1$ is the first time when primary users occur and $\beta_1$ is the first time the channel becomes idle again. The above procedure then generates two sequences:
\begin{center}
$\alpha_1,\ \alpha_2,...,\ \alpha_n,...$\\
$\beta_1,\ \beta_2,...,\ \beta_n,...$
\end{center}

where for $j>1$:
\begin{center}
$\alpha_j=inf\{k:k>\beta_{j-1},s_r^k=0\}$.\\
$\beta_j=inf\{k:k>\alpha_j,s_r^k=1\}$.
\end{center}

Denote $L_{n}^p=L_{\beta_n}$, $\{L_{n}^p\}_{n\geq 1}$ is called the peak covariance process (also a subsequence process) of $\{L_{k}\}_{k\geq 1}$ \cite{Auto:Huang}. The peak covariance is computed at the last time instant of a consecutive $s_r^k=0$. Its stability analysis is important and useful for analyzing system performance, in that it provides an insight in how "bad" the covariance process might be due to successive packet losses.

\begin{definition}\cite{Auto:Huang}
We say the peak covariance sequence $\{L_{n}^p\}_{n\geq 1}$ is stable if $\sup_{n\geq{1}}{\mathbb{E}\parallel{L_{n}^p}\parallel}<\infty$. Accordingly, we say the system satisfies peak covariance stability.
\end{definition}

\begin{lemma} \label{lemma:Sufftwoswitch}
$\{L_{n}^p\}_{n\geq 1}$ is stable if the following two conditions hold:
\begin{eqnarray}
(i)&&q \ge 1 - \frac{1}{{{{\max }_i}|{\lambda _i}(A){|^2}}}\nonumber\\
(ii)&&(1 - q)qd_1^{(1)}[1 + \sum_{i = 1}^{I - 1}{d_i^{(1)}{q^i}} ]\sum_{j = 1}^\infty \parallel{A^j}\parallel{^2}\nonumber\\
&&\times {(1 - q)^{j - 1}} < 1\nonumber
\end{eqnarray}
where $\lambda_A$ is an eigenvalue of the largest magnitude for the matrix $A$, $I = \max \{ {I_0},{I_1}\}$ and $d_i^{(1)}$ is a positive constant given in the proof below.
\end{lemma}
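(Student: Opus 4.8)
The plan is to reduce peak-covariance stability to a scalar contraction between successive peak times. Writing $L_k$ for the conditional second-moment matrix of the stacked vector $[e_k;\hat{x}_k]$ obeying the $p=1$ closed-loop recursion, I would track $L_k$ across one full cycle $\beta_{n-1}\to\alpha_n\to\beta_n$: a ``good'' stretch of consecutive successful receptions of random length $g_n=\alpha_n-\beta_{n-1}$, a ``blackout'' of length $\tau_n=\beta_n-\alpha_n\ge 1$, and the single recovery update at $\beta_n$. Since $\{s_r^k\}$ is i.i.d. Bernoulli$(q)$, the strong Markov property makes $g_n$ and $\tau_n$ independent, with $\mathbb{P}(\tau_n=j)=(1-q)^{j-1}q$ and $\mathbb{P}(g_n=i)=q^{i-1}(1-q)$, and both independent of the history up to $\beta_{n-1}$. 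The goal is an inequality $\mathbb{E}\|L_n^p\|\le a\,\mathbb{E}\|L_{n-1}^p\|+b$ with $a<1$ and $b<\infty$; iterating it yields $\sup_{n\ge 1}\mathbb{E}\|L_n^p\|\le b/(1-a)<\infty$, which is the claim.

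For the blackout, observe that when $p=1$ and $s_r^k=0$ the control does not reach the actuator and no measurement is received, so both $e_k$ and $\hat{x}_k$ propagate through $A$ (plus the process-noise injection $Bs_i^k\upsilon_k$). Hence at the end of a blackout of length $\tau_n$,
\[
L_{\beta_n}\ \preceq\ A^{\tau_n}\,\bar{L}\,(A^{\tau_n})^T+\sum_{j=0}^{\tau_n-1}A^jBVB^T(A^j)^T ,
\]
where $\bar{L}$ is the covariance at the onset of the blackout. Taking spectral norms and then expectation over $\tau_n$ produces the weight $\sum_{j\ge 1}\|A^j\|^2(1-q)^{j-1}$; since $\|A^j\|$ grows essentially like $(\max_i|\lambda_i(A)|)^j$, this series is summable exactly under condition $(i)$ (the non-diagonalizable boundary case introduces only polynomial corrections, which I would handle as in \cite{Auto:Huang}). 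The accumulated noise term contributes only to the constant $b$.

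For the good stretch I would use that on $\{s_r^k=1\}$ the filter reduces to an ordinary Kalman update and the loop matrix is $A-BF$, stable by construction. Once the good stretch lasts at least $I=\max\{I_0,I_1\}$ steps, observability flushes the error $e$ and controllability together with stability of $A-BF$ flushes the estimate $\hat{x}$, so $\bar{L}$ is bounded by a universal constant independent of $L_{n-1}^p$ (another contribution to $b$). The only way the previous peak survives into $\bar{L}$ is a short good stretch, $g_n<I$: after $i$ successful steps the residual dependence on $L_{n-1}^p$ is damped by a factor I would define as $d_i^{(1)}$ (a product of the one-step Kalman information contraction and the $A-BF$ closed-loop decay), with $d_1^{(1)}$ its one-step value. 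Summing over $1\le i\le I-1$ with geometric weights gives the bracket $[1+\sum_{i=1}^{I-1}d_i^{(1)}q^i]$ and leading factor $(1-q)q\,d_1^{(1)}$; multiplying by the blackout weight $\sum_{j\ge 1}\|A^j\|^2(1-q)^{j-1}$ yields exactly the left-hand side of condition $(ii)$, so $(ii)$ is precisely the requirement $a<1$.

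The main obstacle is the good-stretch analysis: one must make the constants $d_i^{(1)}$ explicit and prove the claimed decay of the coupled pair $(e_k,\hat{x}_k)$ through a good stretch of random, possibly sub-$I$ length, combining the Kalman information inequality over $i$ observable steps with the closed-loop contraction while keeping the $e$--$\hat{x}$ cross terms under control in the $2\times 2$ block structure of $L_k$. Secondary technical points are the boundary character of condition $(i)$ when $A$ has nontrivial Jordan blocks, and justifying via the strong Markov property of $\{s_r^k\}$ that the blackout length, the good-stretch length, and the pre-$\beta_{n-1}$ history are mutually independent, so that the expectations factor as used above. With these in hand, the contraction $\mathbb{E}\|L_n^p\|\le a\,\mathbb{E}\|L_{n-1}^p\|+b$ follows and the lemma is immediate.
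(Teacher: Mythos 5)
Your overall architecture --- cycle-by-cycle analysis of $L_k$ across a good stretch, a blackout, and a recovery, leading to a contraction $\mathbb{E}\|L_n^p\|\le a\,\mathbb{E}\|L_{n-1}^p\|+b$ --- is a reasonable reconstruction of the Huang--Dey machinery, and you correctly identify where conditions (i) and (ii) come from (summability of $\sum_j\|A^j\|^2(1-q)^{j-1}$ and the requirement $a<1$). But you have left the hardest step unexecuted, and you say so yourself: controlling the $e$--$\hat{x}$ cross terms in the $2\times 2$ block structure of $L_k$ through a good stretch of random length, and making the constants $d_i^{(1)}$ explicit. As written, the proposal is a plan whose central estimate is missing, so it does not yet constitute a proof.

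The paper's proof shows that this obstacle is illusory, and that is the idea you are missing. Since the a priori error has zero conditional mean, $\mathbb{E}\{e_{k+1}\hat{x}_{k+1}^T\,|\,I_k\}=0$, so $L_{k+1}$ is \emph{block diagonal}: $L_{k+1}=\mathrm{diag}(P_{k+1},M_{k+1})$ with $P_{k+1}=\mathbb{E}\{e_{k+1}e_{k+1}^T|I_k\}$ and $M_{k+1}=\mathbb{E}\{\hat{x}_{k+1}\hat{x}_{k+1}^T|I_k\}$. This decouples the problem into two scalar peak-covariance problems, each of which is handled by a direct appeal to the cited result of Huang and Dey rather than by re-deriving it: the $P$-recursion is exactly the intermittent-Riccati setting of that paper (yielding constants $c_i^{(1)}$ from the affine bounds $\|F^i(P)\|\le c_i^{(1)}\|P\|+c_i^{(0)}$ on the iterated Riccati map, with the observability index $I_0$), and the $M$-recursion is a Lyapunov-type iteration $M_{k+1}=(A-s_r^kBF)M_k(\cdot)^T+T_k$ driven by a forcing term $T_k$ whose boundedness must be established \emph{first}, as a consequence of the stability of $\{P_n^p\}$ --- note this sequential dependence, which your symmetric treatment of $e$ and $\hat{x}$ does not capture. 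The constants $d_i^{(1)}$ in condition (ii) are then simply $\max\{c_i^{(1)},e_i^{(1)}\}$ of the growth constants of the two iterated maps $F^i$ and $G^i$ (not one-step "damping factors" of a Kalman information contraction, as you describe them), and $I=\max\{I_0,I_1\}$ merges the two index ranges. If you want to salvage your direct coupled-contraction route, the first thing to prove is precisely the block diagonality of $L_k$; once you have it, your cycle decomposition collapses into the paper's two separate applications of the cited stability theorem.
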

\begin{proof}
Since the separation principle holds in this case, estimation and control can be performed separately. Consider the error covariance
\begin{eqnarray}
P_{k+1}=\mathbb{E}\{e_{k+1}e_{k+1}^{T}|I_k\}=AP_k A^T+Bp_j VB^T-s_r^kAP_kC^T(CP_kC^T+W)^{-1}CP_kA^T
\label{eq:estimation}
\end{eqnarray}
where $P_{k+1}=P_{k+1|k}$ and $p_j=\mathbb{E}\{s_i^k|s_r^k\}$.

Define $F(P) = APA^T+p_j BVB^T-APC^T(CPC^T+W)^{-1}CPA^T$. When $1 \le i \le \max (I_0,1)$, there always exist $c_i^{(1)}\geq 0$ and $c_i^{(0)}$ that satisfy the following inequality  \cite{Auto:Huang}:
\[\parallel {F^i}(P)\parallel\,\, \le \,\,c_i^{(1)}\parallel P\parallel + c_i^{(0)}\]
where $\parallel X\parallel$ refers to the matrix induced norm $\parallel X\parallel=\max_{|X|=1}|MX|$ where $|X|$ and $|MX|$ denote the usual Euclidean norm for vectors.

By \cite{Auto:Huang}, where only estimation is considered, the peak covariance process $\{P_{n}^p\}_{n\geq 1}$ of $\{P_k\}_{k\geq 1}$ is stable if condition (i) above holds and
\begin{equation}
(1 - q)qc_1^{(1)}[1 + \sum_{i = 1}^{{I_0} - 1}{c_i^{(1)}{q^i}} ]\sum _{j = 1}^\infty \parallel{A^j}\parallel{^2}{(1 - q)^{j - 1}} < 1
\label{eq:huang}
\end{equation}
is satisfied.

Consider the control part, from the close-loop equation above, we have
\begin{equation}
\hat{x}_{k+1} = s_r^kAK_kCe_k+(A-s_r^kBF)\hat{x}_k+s_r^kAK_kw_k
\end{equation}
Let $M_{k+1}=\mathbb{E}\{\hat{x}_{k+1}\hat{x}_{k+1}^T|I_k\}$, then we have
\begin{eqnarray}
M_{k+1} = AM_kA^T+T_k-s_r^k(BFM_kA^T +AM_kF^TB^T-BFM_kF^TB^T)
\label{eq:control}
\end{eqnarray}
where $T_k=s_r^kAP_kC^T(CP_kC^T+W)^{-1}CP_kA^T$. $\parallel T_k\parallel$ is bounded if condition (i) and (\ref{eq:huang}) hold. To see this, note $\{P_{n}^p\}_{n\geq 1}$ is stable, then from each $\beta_n$ to $\alpha_{n+1}$, $s_r^k=1$ for a successive period, it follows that $P_k$ is bounded based on Kalman filtering theory. This implies that $\parallel T_k\parallel$ is also bounded in that period. Once $s_r^k$ becomes $0$, $T_k=0$ in that period.

Define $G(M) = AM A^T+T_s-(BFMA^T+AMF^TB^T-BFMF^TB^T)$, where $T_s:=\{T_k:\ \parallel T_k\parallel=sup_{g\geq 1}\parallel T_g\parallel\}$. Similarly, when $1 \le i \le \max (I_1,1)$, there always exist $e_i^{(1)}\geq 0$ and $e_i^{(0)}$ that satisfy the following inequality  \cite{Auto:Huang}:
\[\parallel{G^i}(M)\parallel\,\, \le \,\,e_i^{(1)}\parallel M\parallel + e_i^{(0)}\]

Following the same arguments in \cite{Auto:Huang}, besides condition (i) and (\ref{eq:huang}),
\begin{equation}
(1 - q)qe_1^{(1)}[1 + \sum_{i = 1}^{{I_1} - 1}{e_i^{(1)}{q^i}} ]\sum_{j = 1}^\infty \parallel{A^j}\parallel{^2}{(1 - q)^{j - 1}} < 1
\label{eq:huangcontrol}
\end{equation}
is also satisfied (note in (\ref{eq:huangcontrol}) $I_1$ is used instead of $I_0$ in (\ref{eq:huang})), then the peak covariance process $\{M_{n}^p\}_{n\geq 1}$ of $\{M_k\}_{k\geq 1}$ is stable. We have now
\begin{eqnarray}
{L_{k + 1}} &=& \mathbb{E}\left\{ \left[ \begin{array}{l}
 {e_{k + 1}} \\
 {\hat{x}_{k + 1}} \\
 \end{array} \right]{\left[ \begin{array}{l}
 {e_{k + 1}} \\
 {\hat{x}_{k + 1}} \\
 \end{array} \right]^T}|{I_k}\right\}\nonumber\\
 &=&\left[ \begin{array}{cccc}
 {P_{k + 1}} & 0\\
 0 & {M_{k + 1}}\\
 \end{array} \right]\nonumber
 \end{eqnarray}

Thus, $L_{n}^p=\left[ \begin{array}{cccc}
 {P_{n}^p} & 0\\
 0 & {M_{n}^p}\\
 \end{array} \right]$ is stable. Define for $1\leq i\leq \min\{(I_0-1),(I_1-1)\}$, $d_i^{(1)}=max\{c_i^{(1)},e_i^{(1)}\}$; For $i>\min\{(I_0-1),(I_1-1)\}$, if $I_0\geq I_1$, $d_i^{(1)}=c_i^{(1)}$, otherwise $d_i^{(1)}=e_i^{(1)}$. Then, we can combine (\ref{eq:huang}) and (\ref{eq:huangcontrol}) together and get condition (ii) in the lemma.
\end{proof}

\begin{remark}:
Lemma \ref{lemma:Sufftwoswitch} gives sufficient conditions for a linear gain to stabilize the CR closed-loop system when the optimal controller cannot be obtained.
\end{remark}

In the next section, illustrative examples are provided to show improved performance of the closed-loop system through CR links and test the stability conditions.

\subsection{Numerical Examples}

We consider the model of the CR system shown in Fig. \ref{fig:twoMath}, with instable inverted pendulum-cart system parameters:\\
$A = \left[ \begin{array}{l}
 {\rm{1}}{\rm{.0000\ \  0}}{\rm{.0000 \ \ 0}}{\rm{.0010 \ \ - 0}}{\rm{.0000}} \\
 {\rm{0}}{\rm{.0000 \ \ 1}}{\rm{.0000\ \  - 0}}{\rm{.0000 \ \  0}}{\rm{.0010}} \\
 {\rm{0}}{\rm{.0000\ \  0}}{\rm{.0022 \ \ 0}}{\rm{.9842 \ \ - 0}}{\rm{.0000}} \\
 {\rm{0}}{\rm{.0000 \ \ 0}}{\rm{.0278\ \  - 0}}{\rm{.0363 \ \ 0}}{\rm{.9999}} \\
 \end{array} \right]$,
${\rm{B  = }}{\left[ {{\rm{0}}{\rm{.0000\ \ ,0}}{\rm{.0000\ \ ,0}}{\rm{.0023\ \ ,0}}{\rm{.0052}}} \right]^T}$,\\
\\
When the three PUs are detected in the sensing regions, assume that ${p_1} = {p_2} = {p_3} = 0.8$. The controller is a suboptimal LQR to the linear deterministic system. We use the LQR gain for the deterministic system ${x_{k + 1}} = A{x_k} + B{u_k}$ where ${u_k} =  - F{x_k}$:
\[F = [{\rm{ - 13}}{\rm{.9382 \ \ 173}}{\rm{.6752  \ \ - 29}}{\rm{.9030\ \ 18}}{\rm{.4750}}]\]

We can see the step response is the desired result in Fig. \ref{fig:CLdesired}. Then when we fix $F$ and change $p_1=0.5$, we can see that the step response diverges in Fig. \ref{fig:CLunstable}.
\begin{figure}
 \begin{center}
  % Requires \usepackage{graphicx}
  \includegraphics[width=8cm]{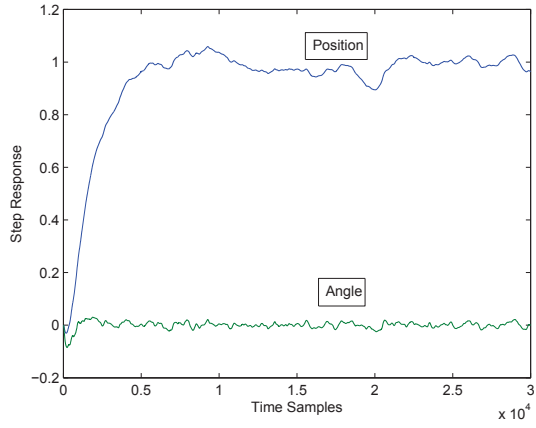}\\
   \end{center}
  \caption{Step response of the closed-loop system.}\label{fig:CLdesired}
\end{figure}
\begin{figure}
 \begin{center}
  % Requires \usepackage{graphicx}
  \includegraphics[width=8cm]{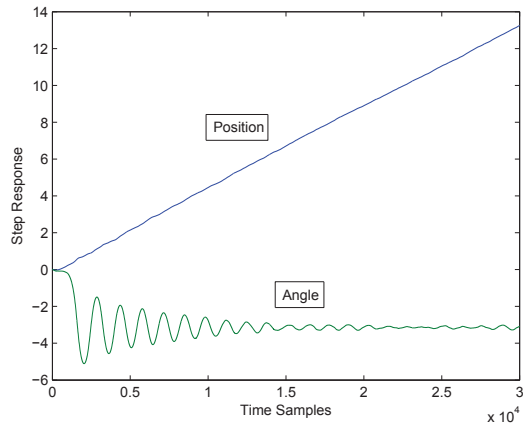}\\
   \end{center}
  \caption{Step response with more activity of PUs.}\label{fig:CLunstable}
\end{figure}

\begin{figure}
 \begin{center}
  % Requires \usepackage{graphicx}
  \includegraphics[width=8cm]{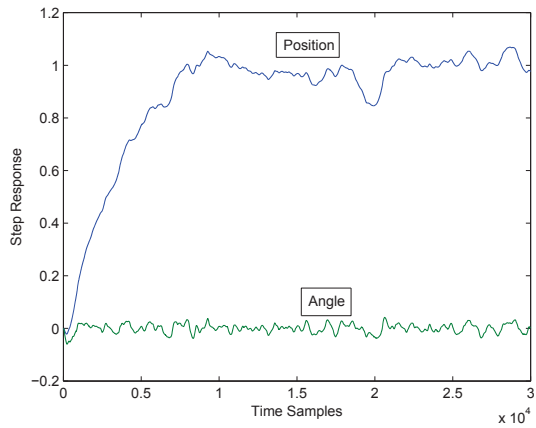}\\
   \end{center}
  \caption{Step response for a better controller gain.}\label{fig:CLupdated}
\end{figure}
\begin{figure}
 \begin{center}
  % Requires \usepackage{graphicx}
  \includegraphics[width=8cm]{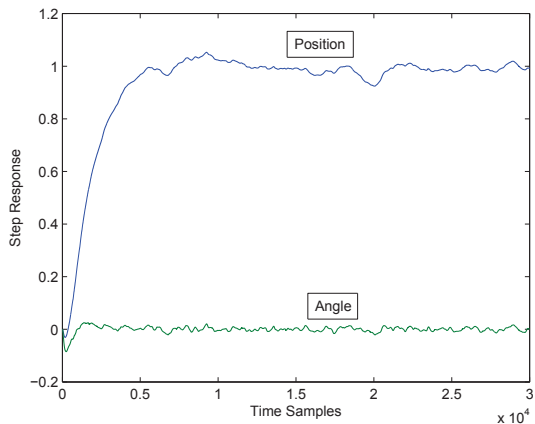}\\
   \end{center}
  \caption{Step response when $p=1$.}\label{fig:CLtest}
\end{figure}

Next, we set ${p_1} = 0.5,{p_2} = {p_3} = 0.8$, but design a LQR for the system ${x_{k + 1}} = A{x_k} + pqB{u_k}$ as suggested, and run the step response for the closed-loop system, which produces Fig. \ref{fig:CLupdated} shows that for this improved design the step response of the system is stable.

Finaly, set ${p_1} = 1,{p_2} = 0.7,{p_3} = 0.8$. Here we have $I = I_1=I_0= 4$ and $||{F^i}(P)||\, \le \,||{A^i}{A^i}^T||||P||$ and $||{G^i}(M)||\, \le \,||{A^i}{A^i}^T||||M||$. Thus we take $d_1^{(1)} = 1.0395$, $d_2^{(1)} = 1.0805$, $d_3^{(1)} = 1.1230$. After some computations, the left-hand side of condition (ii) in lemma \ref{lemma:Sufftwoswitch} is approximately $0.9997\le 1$ and the condition is satisfied. The result is depicted in Fig. \ref{fig:CLtest}. The step response is stable, so the error covariance and thus the peak covariance are both stable. Note that in this case the optimal controller exists and is linear in the state estimates.

\section{Conclusion}\label{sec:con}
In this paper, estimation and control of CR systems is investigated. A CR system model, called the two-switch model, proposed in the communication community to represent the communication CR links is employed. The optimal linear estimator is derived for a single CR link. Then estimation and control of the closed-loop system over two CR links are addressed. The controller is shown to be a non-linear function of the state estimates. Several stability conditions are derived and numerical examples are presented to illustrate the effectiveness of results. As a new emerging communication technology, CR combined with control theory may pave the way for new applications, for example, see \cite{TWE:xiao}. The future work will focus on more practical transmission models of CR systems which are generated from the sensing spectrum model and the corresponding estimation and control algorithms, and also the probability analysis on the stability for the case $p_1\neq 1$ and more PUs involved. Moreover, less conservative stability conditions will be derived based on \cite{Auto:you}.

\section{Acknowledgement}\label{sec:ack}
This work was supported in part by NSF grant NSF-CMMI 1334094. This paper has been authored by employees of UT-Battelle, LLC, under contract DE-AC05-00OR22725 with the U.S. Department of Energy. Accordingly, the United States Government retains and the publisher, by accepting the article for publication, acknowledges that the United States Government retains a non-exclusive, paid-up, irrevocable, world-wide license to publish or reproduce the published form of this manuscript, or allow others to do so, for United States Government purposes.

\end{document}